\newtheorem{theorem}{Theorem}
\newtheorem{definition}[theorem]{Definition}
\newtheorem{conjecture}[theorem]{Conjecture}
\DeclareMathSymbol{\leqslant}     {\mathrel}{AMSa}{"36}
\DeclareMathSymbol{\geqslant}     {\mathrel}{AMSa}{"3E}
\newcommand{\NN}{\mathbb{N}} 
\title{Global certification via perfect hashing\footnote{The authors are supported by ANR project GrR (ANR-18-CE40-0032).}}
\author[1]{Nicolas Bousquet} 
\author[1]{Laurent Feuilloley} 
\author[1]{Sébastien Zeitoun}
\affil[1]{Univ. Lyon, CNRS, Université Lyon 1, LIRIS UMR CNRS 5205, F-69621, Lyon, France}
\begin{document}
\maketitle

\begin{abstract}
	In this work, we provide an upper bound for global certification of graph homomorphism, a generalization of graph coloring. In certification, the nodes of a network should decide if the network satisfies a given property, thanks to small pieces of information called certificates. Here, there is only one global certificate which is shared by all the nodes, and the property we want to certify is the existence of a graph homomorphism to a given graph.
 
    For bipartiteness, a special case of graph homomorphism, Feuilloley and Hirvonen proved in~\cite{FeuilloleyH18} some upper and lower bounds on the size of the optimal certificate, and made the conjecture that their lower bound could be improved to match their upper bound. We prove that this conjecture is false: their lower bound was in fact optimal, and we prove it by providing the matching upper bound using a known result of perfect hashing.
\end{abstract}

\section{Introduction}
\label{sec:intro}

The topic of certification originates from self-stabilization in distributed computing, and consists in the following. Nodes of a network are provided with a unique identifier, and with some pieces of information called \emph{certificates}. These certificates can either be local (each node receive its own certificate), or global (there is a unique certificate, which is the same for all the nodes). The aim of the nodes is then to decide if the network satisfies a given property. To do so, each node should take its decision (accept or reject) based only on its local view in the network, which consists in its neighbors, their identifiers and their certificates.
The correctness requirement for a certification scheme is the following one: for every network, the property is satisfied if, and only if, there exists an assignment of the certificates such that all the nodes accept.
Unsurprisingly, the parameter we want to optimize is the size of the certificates, which is usually expressed as a function of $n$, the number of nodes in the network.
For a given property~$\mathcal{P}$, the optimal size of the certificates can be seen in some sense as a measure on the locality of~$\mathcal{P}$: the smaller it is, the more local~$\mathcal{P}$ is. 
We refer to the survey~\cite{Feuilloley21} for an introduction to certification.

As mentioned above, there are two kinds of locality in certification. In one case, the certificates are local, and the verification is local too; in the other case, the certificate is global, but the verification remains local. When speaking about local or global certification, we thus refer to the locality or globality of the certificate (and not of the verification, which is always local). In general, these two kinds of certification are somehow linked, because bounds for one can be derived from bounds for the other. Namely, a global certification scheme is a particular case of a local one, and conversely, a local certification scheme can be transformed into a global one by giving as global certificate the list of the local certificates of each node in the network (so that each node can simulate the local certification scheme by recovering its own local certificate from the global one, see~\cite{FeuilloleyH18} for more details).
However, these generic transformations are often not optimal.

In this work, the property we want to certify is the existence of a homomorphism to a given graph $H$. A particular case which has already been studied in~\cite{FeuilloleyH18} is bipartiteness (it corresponds to the case where $H$ is a clique on two vertices). Note that there exists a local certification scheme for bipartiteness using only one bit per vertex (where the certificate is the color in a proper two-coloring, and the verification of every node just consists in checking if it received a different color from all its neighbors).
Here, we focus on global certification, and with a global certificate it is less clear how to certify it. Authors in~\cite{FeuilloleyH18} made the following Conjecture~\ref{conj:n*log n} (which is also discussed in~\cite{Feuilloley21}, see Open Problem 9), in the standard case where the range of identifiers is polynomial in $n$:

\begin{conjecture}
	\label{conj:n*log n}
	The optimal size for global certification of bipartiteness is $\Theta(n \log n)$.
\end{conjecture}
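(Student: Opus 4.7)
My plan is to establish the matching upper and lower bounds separately.

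For the upper bound $O(n \log n)$: take the global certificate to be the list of $n$ pairs $(\mathrm{id}(v), c(v))_{v \in V}$, sorted by identifier, where $c$ is a proper 2-coloring of the bipartite graph $G$. Since identifiers are in a polynomial range, each pair has size $O(\log n)$, so the total certificate size is $O(n \log n)$. Each node $v$ checks that exactly $n$ distinct identifiers appear in the list, that its own identifier appears with some color $c(v)$, and that for every neighbor $u$ the pair $(\mathrm{id}(u), c(u))$ is listed with $c(u) \neq c(v)$. Completeness is immediate. For soundness, if $G$ is not bipartite then whichever function $c$ is encoded in a candidate certificate assigns the same color to the two endpoints of some edge, so both of them reject.

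For the lower bound $\Omega(n \log n)$: this is the hard direction and the true goal of the plan. I would use a fooling-set argument. Concretely, I would exhibit a family $\mathcal{F}$ of bipartite labeled graphs on $n$ vertices, with identifiers in $[1, n^c]$, such that $|\mathcal{F}| = 2^{\Omega(n \log n)}$ and any two distinct $G, G' \in \mathcal{F}$ are adversarially incompatible in the following sense: whenever a single certificate $C$ makes every node accept on both $G$ and $G'$, one can splice local neighborhoods from $G$ and $G'$ to obtain a non-bipartite graph $G''$ on which $C$ still makes every node accept, contradicting soundness. If such $\mathcal{F}$ exists, pigeonhole immediately forces $s(n) \geq \log |\mathcal{F}| = \Omega(n \log n)$. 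A natural skeleton to start from is a disjoint collection of long paths whose endpoints can be rerouted by swapping identifiers between $G$ and $G'$ in a way that creates an odd cycle; the entropy in choosing which $n$ identifiers of $[1, n^c]$ are used and how they sit along the paths is already of order $n \log n$.

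The main obstacle, which I expect to be the crux of the argument, is executing the splice step so that it genuinely produces a non-bipartite instance for which $C$ still makes every node accept. Two labeled graphs with the same combinatorial skeleton are locally indistinguishable at each node, and the certifier can exploit this: for instance, it might address nodes by their identifiers using a very compact data structure (such as a \emph{perfect hash function}) of total description size far below $n \log n$, entirely sidestepping the cost that the naive encoding pays. Any successful proof of $\Omega(n \log n)$ must therefore rule out such compact address schemes, which amounts to showing that no sub-$n \log n$ function of the identifier set can simultaneously encode both ``which identifiers are present'' and ``how they are 2-colored''. This is exactly the point where I expect the plan to face serious difficulty, since separating these two pieces of information appears to require a genuinely new idea.
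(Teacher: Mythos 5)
The statement you are trying to prove is false, and the paper's whole point is to disprove it. The obstacle you flag at the end of your plan --- that a verifier ``might address nodes by their identifiers using a very compact data structure (such as a \emph{perfect hash function}) of total description size far below $n \log n$'' --- is not a technical difficulty to be overcome; it is exactly the scheme the paper constructs. By Theorem~\ref{thm:perfect_hashing} there is an $(n,M)$-perfect hash family $H_{n,M}$ of size $\lceil n e^n \log M \rceil$, so a member $h$ of it can be named with $O(n + \log\log M)$ bits. The certificate is then $(n, h, L)$ where $L$ is the $n$-bit list giving the color of the vertex hashed to each slot; each node recovers its own color and its neighbors' colors as $L[h(Id(\cdot))]$ and checks properness locally. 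Soundness does not even require $h$ to be injective on the identifier set, since any accepted assignment $u \mapsto L[h(Id(u))]$ is itself a proper $2$-coloring. For $M = n^c$ this gives total certificate size $O(n)$, so no $\Omega(n \log n)$ lower bound can exist, and no fooling-set family $\mathcal{F}$ with the properties you describe can be constructed: the pigeonhole argument would contradict the existence of this $O(n)$-bit scheme. Your final sentence, asking whether one can show ``that no sub-$n\log n$ function of the identifier set can simultaneously encode both `which identifiers are present' and `how they are $2$-colored','' has a negative answer --- the hash function encodes enough of the first piece (a collision-free renaming into $\{0,\dots,n-1\}$) in $O(n + \log\log M)$ bits, and the second piece costs only $n$ bits on top of that.

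Your upper-bound half is correct but is just the known $O(n \log M)$ bound from Theorem~\ref{thm:existing_bounds}; it is not tight. The correct resolution is Theorem~\ref{thm:new_bound}: the optimal size is $\Theta(n + \log\log M)$, matching the previously known lower bound, so the right instinct here was to turn your ``main obstacle'' into the theorem rather than to fight it.
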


In~\cite{FeuilloleyH18}, the authors proved upper and lower bounds, both parametrized by $n$ (the number of vertices in the graph), and by the range of identifiers, denoted by $M(n)$ (or simply~$M$, keeping in mind that it is a function of $n$). More precisely, they proved the following:

\begin{theorem}
	\label{thm:existing_bounds}
	Let $s$ denote the optimal size for global certification of bipartiteness. Then, we have:
	$$s = \Omega(n + \log \log M) \qquad \text{and} \qquad s = O(\min\{M, n \log M\})$$
\end{theorem}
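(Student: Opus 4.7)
The plan is to treat the four inequalities separately, starting with the constructive upper bounds and then the harder lower bounds.

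\emph{Upper bounds.} For $s = O(M)$, I would use a certificate of $M$ bits in which bit $i$ records the color of the vertex with identifier $i$ in a candidate proper $2$-coloring of $G$ (arbitrary if no vertex has that identifier). Each node $v$ reads its own color at position $\text{id}(v)$ and checks that every neighbor's color (read similarly) differs from its own. If $G$ is bipartite, any valid $2$-coloring gives a valid certificate; otherwise no certificate can make every node accept, since that would produce a proper $2$-coloring. For $s = O(n \log M)$, I would globalize the trivial one-bit-per-vertex local scheme by giving as certificate the list of $n$ pairs $(\text{id}(v), \text{color}(v))$; each pair costs $\log M + 1$ bits, every node looks up its own and its neighbors' colors in the list, and the same consistency check applies.

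\emph{Lower bound $s = \Omega(n)$.} The verifier is a fixed local function of the local view and the certificate, so a certificate $c$ is valid for a graph $G$ iff every local view appearing in $G$ is accepted by $c$. In particular, if the set of local views of a graph $G_h$ is contained in the union of local views of graphs $G_1, G_2$ both validated by $c$, then $c$ also validates $G_h$. The plan is a fooling-set / ``gluing'' argument: exhibit a family of $2^{\Omega(n)}$ bipartite graphs such that for any two of them one can assemble a non-bipartite graph from their combined local views, forcing distinct certificates and hence $s = \Omega(n)$. The main obstacle is building such a rich family, most likely by arranging $\Omega(n)$ small switchable gadgets along a common skeleton whose individual choices control the parity of some global cycle.

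\emph{Lower bound $s = \Omega(\log\log M)$.} This accounts for the identifier range. The intuition is that even a token amount of information about identifiers must enter the certificate: if $s = o(\log\log M)$ then there are $o(\log M)$ possible certificates, and one should be able to exhibit a small fixed graph together with many identifier relabelings so that, by pigeonhole, two relabelings share a certificate which can then be glued into a non-bipartite instance via the argument above. Pinning down the right small graph and relabeling family is the second main obstacle of the proof.
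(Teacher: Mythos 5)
Your two upper-bound constructions are correct and are exactly the schemes this paper attributes to~\cite{FeuilloleyH18} (the $M$-bit color table indexed by identifier, and the list of $(\text{identifier},\text{color})$ pairs of size $O(n\log M)$); note in any case that Theorem~\ref{thm:existing_bounds} is cited from prior work here, so the paper itself only sketches these and proves neither lower bound.

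The genuine gap is in the lower bounds, which are the substantive half of the statement: what you give is a proof plan whose central objects are explicitly missing. Your gluing principle is sound (the verifier sees only the certificate and the identified closed neighborhood, so a certificate accepted on $G_1$ and $G_2$ is accepted on any hybrid whose views all occur in $G_1\cup G_2$), but the theorem's content is precisely the construction you defer: a family of $2^{\Omega(n)}$ bipartite graphs on a common identifier set such that \emph{every} pair admits a non-bipartite hybrid all of whose views (including the exact identifiers of each vertex and its neighbors) appear in one of the two graphs --- this is a delicate parity/crossover construction on cycles, not a routine step, and without it no bound on the number of certificates follows. Similarly, for $\Omega(\log\log M)$ the pigeonhole only works if you exhibit more than $2^{s}$ identifier assignments of a fixed bipartite graph that are \emph{pairwise} gluable into an odd structure, i.e.\ a family of size roughly $\log M$ with that property; saying that ``pinning down the right graph and relabeling family'' is an obstacle concedes exactly the part that needs to be proved. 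As written, the proposal establishes only $s=O(\min\{M,\,n\log M\})$, not $s=\Omega(n+\log\log M)$.
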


In the standard case where $M = n^c$ for some constant $c>1$, Conjecture~\ref{conj:n*log n} is equivalent as saying that the lower bound of Theorem~\ref{thm:existing_bounds} can be improved to match the upper bound. It would also mean that the generic transformation which turns a local certification scheme of size $O(1)$ into a global one of size $O(n \log n)$ (where the global certificate is the list of the local certificates with each corresponding identifier), is optimal for bipartiteness.

In fact, we show that Conjecture~\ref{conj:n*log n} is false. Interestingly, it turns out that the lower bound of Theorem~\ref{thm:existing_bounds} is optimal, as stated in Theorem~\ref{thm:new_bound}.

\begin{theorem}
	\label{thm:new_bound}
	There exists a global certification scheme for bipartiteness with a certificate of size $O(n + \log \log M)$.
\end{theorem}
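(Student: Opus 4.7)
The plan is to combine a short description of a perfect hash function with an $n$-bit colour table. Concretely, I will appeal to the classical fact (due to Fredman--Koml\'os, Mehlhorn, and others) that there exists a family $\mathcal{H}$ of functions $h : [M] \to [n]$ of size $2^{O(n + \log \log M)}$ which is \emph{perfect} for $n$-subsets of $[M]$: for every $S \subseteq [M]$ of size $n$, some $h \in \mathcal{H}$ is injective on $S$. In particular, any single element of $\mathcal{H}$ can be encoded using $O(n + \log \log M)$ bits, which matches exactly the budget allowed by the theorem.

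Given this tool, my certification scheme is the following. Assume $G$ is bipartite, let $S \subseteq [M]$ be the set of $n$ identifiers appearing in $G$, and fix a proper $2$-colouring $c : V \to \{0,1\}$. The prover picks some $h \in \mathcal{H}$ that is injective on $S$, and a table $T \in \{0,1\}^n$ such that $T[h(\mathrm{id}(v))] = c(v)$ for every $v \in V$ (this is well-defined precisely because $h$ is injective on $S$). The global certificate is the pair $(h, T)$, of total size $O(n + \log \log M) + n = O(n + \log \log M)$. Each node $v$ verifies by computing $T[h(\mathrm{id}(v))]$ as its own colour, doing the same for every neighbour $u$ from $u$'s identifier, and accepting iff every neighbour receives a different colour.

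Completeness is immediate from the construction. For soundness, note that \emph{any} global certificate $(h', T')$ produces, via the same local rule, a map $v \mapsto T'[h'(\mathrm{id}(v))]$, which is \emph{some} $2$-colouring of $V$. If $G$ is not bipartite then no $2$-colouring is proper, so some edge is monochromatic and at least one of its endpoints rejects. A pleasant feature is that the verifier never has to test that the hash function is actually injective on $S$: that injectivity is used only by the prover, to make the table $T$ well-defined.

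The one place where real care is needed is invoking the perfect-hashing bound with the right parameters. What is required is a description length of $O(n + \log \log M)$ for a perfect hash function into $[n]$ (rather than into $[O(n)]$ or $[n^2]$, which would either waste bits in the table or, in the latter case, blow up its size to $\Theta(n^2)$). I would locate an explicit reference for exactly this version of the statement and cite it, rather than reprove it; granting that result, the rest of the argument is entirely routine.
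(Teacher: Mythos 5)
Your scheme is exactly the paper's construction (specialized to bipartiteness): a perfect hash function $h$ from a $(n,M)$-perfect hash family of size $\lceil n e^n \log M \rceil$ (the paper cites Mehlhorn), hence encodable in $O(n + \log\log M)$ bits, together with an $n$-entry colour table, and the same key soundness remark that the verifier never needs to check injectivity of $h$ since any certificate induces \emph{some} $2$-colouring whose propriety is checked edge by edge. The only cosmetic difference is that the paper also writes $n$ explicitly in the certificate so that nodes can determine $M$ and decode the index of $h$, a detail your encoding handles implicitly via the length of $T$.
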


Note that, in the standard case where $M$ is polynomial in $n$, it gives a certificate of size $\Theta(n)$, which is better than the generic transformation from $O(1)$-local certificates to a $O(n \log n)$-global one, corresponding to Conjecture~\ref{conj:n*log n}. Note also that this bound remains $\Theta(n)$ even in the case where $M = 2^{2^{O(n)}}$ (while the previous upper bound provided by Theorem~\ref{thm:existing_bounds} would be $2^{O(n)}$ in that case).
\medskip

We actually prove a generalization of Theorem~\ref{thm:new_bound}, in terms of graph homomorphisms. Remember that a \emph{homomorphism} from a graph $G$ to a graph $H$ is a function $\varphi : V(G) \rightarrow V(H)$ such that, for every edge $\{u,v\} \in E(G)$, we have $\{\varphi(u), \varphi(v)\} \in E(H)$. Graph homomorphisms generalize graph colorings, since one can easily remark that a graph is $k$-colorable if and only if there exists a homomorphism from $G$ to the clique on $k$ vertices.
For example, a graph is bipartite if and only if there is a homomorphism from $G$ to an edge.

Our main result is then the following. 

\begin{restatable}{theorem}{ThmUpperBound}
	\label{thm:graph_homomorphism}
	Let $H=(V',E')$ be a graph. There exists a global certification scheme for the existence of a homomorphism to $H$ with a certificate of size $O(n \log n' + \log \log M)$ (where $n'=|V(H')|$).
\end{restatable}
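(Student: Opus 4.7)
The plan is to combine a compact hashing-based indexing scheme with the natural certificate for a homomorphism. The obvious idea---sending, for every identifier, the image of the corresponding vertex under a fixed homomorphism $\varphi : V(G) \to V(H)$---would cost $\Theta(n \log M)$ bits, because merely listing the identifiers already costs $\log M$ per vertex. To achieve $O(n \log n' + \log \log M)$, I would replace identifiers by small indices using a perfect hash function, exploiting the classical construction of Fredman, Komlós and Szemerédi (as already announced in the abstract).

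More concretely, let $I \subseteq [M]$ denote the set of identifiers used in $G$, so $|I| = n$. The key ingredient I would invoke is a perfect hashing result of the following form: there exists a hash function $h : [M] \to [\ell]$ with $\ell = O(n)$ that is injective on $I$ and whose description fits in $O(n + \log \log M)$ bits. Given such an $h$ and a homomorphism $\varphi$ from $G$ to $H$, the global certificate would consist of the description of $h$ followed by a table $T$ of length $\ell$ such that $T[h(\mathrm{id}(v))] = \varphi(v)$ for every $v \in V(G)$ (the remaining entries being arbitrary). Each entry of $T$ uses $\lceil \log n' \rceil$ bits, for a total of $O(n \log n' + n + \log \log M) = O(n \log n' + \log \log M)$ bits.

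For the verification, each vertex $v$ parses the certificate, computes $h(\mathrm{id}(v))$, retrieves $\varphi'(v) := T[h(\mathrm{id}(v))] \in V(H)$, and accepts if and only if $\{\varphi'(v), \varphi'(u)\} \in E(H)$ for every neighbor $u$ in $G$. Completeness is immediate: the honest certificate above produces $\varphi' = \varphi$ at every vertex, and hence every vertex accepts. Soundness is equally clean because no part of the verification requires $h$ to be truly injective on $I$: if every vertex accepts some arbitrary (possibly malicious) certificate, the induced map $\varphi' : V(G) \to V(H)$ automatically sends each edge of $G$ to an edge of $H$, which is exactly the definition of a homomorphism from $G$ to $H$.

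The main obstacle, and essentially the only nontrivial ingredient, is isolating a perfect hashing statement with exactly the $O(n + \log \log M)$ bit complexity used above. I expect to cite Fredman--Komlós--Szemerédi or a refinement, where the $\log \log M$ term arises from encoding a prime of size $\mathrm{poly}(M)$ used in the standard two-level construction. Once that result is in hand, the certification scheme itself and its correctness proof are a short wrapper; the rest of the argument consists of purely local checks that translate an accepting certificate into an actual homomorphism.
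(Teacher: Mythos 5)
Your proposal is essentially the paper's own proof: a perfect hash function injective on the set of identifiers with range $O(n)$, encoded in $O(n+\log\log M)$ bits, followed by a table of images in $V(H)$ of total size $O(n\log n')$, verified by purely local adjacency checks, with the same key soundness observation that injectivity of $h$ need not be verified since any accepted certificate already induces a homomorphism via $v\mapsto T[h(\mathrm{id}(v))]$. The hashing statement you defer to is exactly what the paper cites (a $(k,\ell)$-perfect hash family of size $\lceil k e^k \log \ell\rceil$, so a member is indexable with $O(k+\log\log\ell)$ bits, which suffices here since the verifier need not be efficient); only your aside about its source is slightly off --- writing a prime of size $\mathrm{poly}(M)$ costs $\Theta(\log M)$ bits, so the $\log\log M$ term must come from a prime of size $\mathrm{poly}(n)\log M$ (or from indexing the family), and plain Fredman--Koml\'os--Szemer\'edi needs a refinement to get the per-bucket seeds down to $O(n)$ bits total.
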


Finally, let us give some intuition on the proof technique used to obtain the bound of Theorem~\ref{thm:new_bound} (which is the same as in Theorem~\ref{thm:graph_homomorphism} because it is just a particular case). As well as in the proof of the upper bounds of Theorem~\ref{thm:existing_bounds}, the prover writes a proper two-coloring in the certificate. Then, each vertex recovers its own color and the colors of its neighbors, and checks if the coloring is locally correct. What differs is the way to encode this coloring. For the $O(M)$ bound, the prover gives as certificate a list of $M$ bits, where the color of the vertex with identifier $i \in \{0, \ldots, M-1\}$ is the $i$-th bit of the list. For the $O(n \log M)$ bound, the certificate is the following: for each identifier $i$ appearing in the graph, the prover writes~$i$ (with $O(\log M)$ bits) together with the color of the vertex having the identifier $i$. In the new upper bound of Theorem~\ref{thm:new_bound}, the idea is to somehow compress the identifiers in the range $\{1, \ldots, n\}$, and then use the same technique as for the $O(M)$ bound. The compression phase is performed using a known result of perfect hashing, stated in Theorem~\ref{thm:perfect_hashing}. This result have independently been used in~\cite{EsperetHZ23} with another type of labeling, but to our knowledge, it is the first time that perfect hashing is used in distributed computing. We hope that this technique could have other applications in future works, in particular for problems related to space complexity.

\section{Model and definitions}
\label{sec:model}

For completeness, let us remind some basic graph definitions. All the graphs we consider are finite, simple, and non-oriented. Let $G=(V,E)$ be a graph. For every $u \in V$, we denote by $N(u)$ the \emph{open neighborhood} of $u$, which is set of vertices $v \in V$ such that $\{u,v\} \in E$.
A~\emph{proper two-coloring} of $G$ is a function $\varphi : V \rightarrow \{0,1\}$ such that, for every $u \in V$ and $v \in N(u)$, we have $\varphi(u) \neq \varphi(v)$. We remind that a graph $G$ is bipartite if and only if it has a proper two-coloring.

Now, let us define formally the model of certification. Let $M : \NN \rightarrow \NN$, called the \emph{identifier range} (which is fixed: it is part of the framework for which certification schemes will be designed). Let $n = |V|$. In the following, we just write $M$ instead of $M(n)$ to have lighter notations.
An \emph{identifier assignment} of $G$ is an injective mapping $Id : V \rightarrow \{0, \ldots, M-1\}$.
Finally, let $C$ be a set, called the set of \emph{certificates}.

\begin{definition}
	Let $Id$ be an identifier assignment of $G$, and $c \in C$ (called the \emph{global certificate}). Let $u \in V$. The \emph{view of $u$} consists in all the information available in its neighborhood, that is:
	\begin{itemize}
		\item its own identifier $Id(u)$;
		\item the set of identifiers of its neighbors, which is $\{Id(v) \; | \; v \in N(u)\}$;
		\item the global certificate $c$.
	\end{itemize}
\end{definition}

A \emph{verification algorithm} is a function which takes as input the view of a vertex, and outputs a decision (\emph{accept} or \emph{reject}).

Let $\mathcal{P}$ be a property on graphs. We say that there is a global certification scheme with size $s(n)$ and identifier range $M$ if there exists a verification algorithm $A$ such that, for all $n \in \NN$, there exists set $C$ of size $2^{s(n)}$ satisfying the following condition: for every graph $G$ with $n$ vertices, $G$ satisfies $\mathcal{P}$ if and only if, for every identifier assignment $Id$ with range $M$, there exists a certificate $c \in C$ such that $A$ accepts on every vertex.

	A verification algorithm is just a function, with no more requirements. In particular, it does not have to be decidable. However, in practice, when designing a certification scheme to prove upper bounds, it turns out to be decidable and often computable in polynomial time. The fact that no assumptions are made on this verification function in the definition just strengthens the results when proving lower bounds, by showing that it does not come from computational limits.

Let us give a last definition, about perfect hashing.
\begin{definition}
	\label{def:perfect_hashing}
	Let $k, \ell \in \NN$ with $k \leqslant \ell$, and let $H$ be a set of functions \mbox{$\{0, \ldots, \ell-1\} \rightarrow \{0, \ldots, k-1\}$}.
	\begin{enumerate}[a)]
		\item A function $h \in H$ is a \emph{perfect hash function} for $S \subseteq \{0, \ldots, \ell-1\}$ if $h(x) \neq h(y)$ for all $x, y \in S$, $x \neq y$.
		\item The family of functions $H$ is a $(k, \ell)$-perfect hash family if, for every $S \subseteq \{0, \ldots, \ell-1\}$ with $|S|=k$, there exists $h \in H$ which is perfect for $S$.
	\end{enumerate}
\end{definition}

\section{Main result}
\label{sec:main_result}

Let us now prove our main result:

\ThmUpperBound*

The key ingredient to prove Theorem~\ref{thm:new_bound} is the following Theorem~\ref{thm:perfect_hashing} (see e.g.~\cite{Mehlhorn84} for a proof).

\begin{theorem}
	\label{thm:perfect_hashing}
	Let $k, \ell \in \NN$ with $k \leqslant \ell$. There exists a $(k, \ell)$-perfect hash family $H_{k, \ell}$ which has size $\lceil k e^k \log \ell \rceil$.
\end{theorem}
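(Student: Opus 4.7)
The plan is to prove Theorem~\ref{thm:perfect_hashing} by the standard probabilistic method. I would set $N = \lceil k e^k \log \ell \rceil$ and draw $N$ functions $h_1, \ldots, h_N \colon \{0, \ldots, \ell-1\} \to \{0, \ldots, k-1\}$ independently and uniformly at random. The goal is to show that with strictly positive probability the random family $\{h_1, \ldots, h_N\}$ is $(k, \ell)$-perfect; this immediately implies the existence of a deterministic family of size $N$ with the desired property.

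The main calculation goes as follows. Fix a subset $S \subseteq \{0, \ldots, \ell-1\}$ with $|S| = k$. Since the values $h_i(x)$ for $x \in S$ are independent and uniform in $\{0, \ldots, k-1\}$, the probability that a single $h_i$ is injective on $S$ is exactly $k!/k^k$, which by Stirling's bound $k! \geq (k/e)^k$ is at least $e^{-k}$. Hence the probability that none of the $N$ functions is perfect for $S$ is at most $(1 - e^{-k})^N \leq \exp(-N/e^k)$ using $1-x \leq e^{-x}$. Applying a union bound over the at most $\binom{\ell}{k} \leq \ell^k$ subsets of size $k$, the probability that the random family fails to be $(k,\ell)$-perfect is at most $\ell^k \exp(-N/e^k)$.

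It then remains to check that this quantity is strictly less than $1$ for our choice of $N$. Taking $\log$ to denote the natural logarithm, $\ell^k \exp(-N/e^k) < 1$ rearranges to $N > k e^k \log \ell$, which is satisfied by $N = \lceil k e^k \log \ell \rceil$ whenever $k e^k \log \ell$ is not an integer; in the (essentially vacuous) boundary case, the sharper Stirling estimate $k!/k^k \geq \sqrt{2\pi k}/e^k$ supplies ample slack, since $\sqrt{2\pi k} > 1$ for all $k \geq 1$. I do not foresee a substantive obstacle here: this is the textbook probabilistic construction of a perfect hash family, and the only care needed is in tracking the constants so that the union bound closes strictly with the stated ceiling.
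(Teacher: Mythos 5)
Your proof is correct. The paper does not actually prove Theorem~\ref{thm:perfect_hashing} itself --- it only cites~\cite{Mehlhorn84} --- and your argument is precisely the standard probabilistic/counting one found in that reference: draw $N$ independent uniform functions, use $k!/k^k \geq e^{-k}$ (from $k! \geq (k/e)^k$) for the injectivity probability on a fixed $k$-set, and close with a union bound over the $\binom{\ell}{k} \leq \ell^k$ subsets, which requires exactly $N > k e^k \ln \ell$. Your handling of the integrality edge case via the sharper Stirling estimate is more care than the statement demands (and the theorem is anyway only used asymptotically, through $\log |H_{k,\ell}| = O(k + \log\log\ell)$), but it is sound; the only remaining degeneracy is the vacuous case $\ell = 1$, where the prescribed family is empty, and that is an artifact of the theorem statement rather than of your proof.
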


\begin{proof}[Proof of Theorem~\ref{thm:new_bound}.]
	Let us describe a global certification scheme for the existence of a homomorphism to $H$ using a certificate of size $O(n \log n' + \log \log M)$ where $n'=|V(H)|$.
	First, since $H$ has $n'$ vertices, we can number them from $1$ to $n'$ and write the number of a vertex of $H$ on $\log n'$ bits. Similarly, for every $k, \ell \in \NN$ with $k \leqslant \ell$, by applying Theorem~\ref{thm:perfect_hashing}, we can number the functions in $H_{k, \ell}$ between $0$ and $|H_{k,\ell}|-1$. Thus, a function of $H_{k, \ell}$ can be represented using $\log |H_{k,\ell}| = O(k + \log \log \ell)$ bits.
	
    Let $G=(V,E)$ be a graph with $|V| = n$, for which there exists a homomorphism $\varphi$ from $G$ to $H$. Let $Id$ be an identifier assignment of $G$. The certificate given by the prover is the following one.
	Let us denote by $S:=\{Id(v) \; | \; v \in V\}$ the set of identifiers appearing in $G$. The set $S$ is included in $\{0, \ldots, M-1\}$ and has size $n$.
	Let $h \in H_{n, M}$ be a perfect hash function for $S$. By definition, the function $h$ induces a bijection between $S$ and $\{0, \ldots, n-1\}$. Let $L$ be the list of size $n$ such that the $i$-th element of $L$, denoted by $L[i]$, is equal to $\varphi(v)$, where $v$ is the unique vertex in $V$ such that $h(Id(v))=i$.
	The certificate given by the prover to the vertices is the triplet $(n, h, L)$, where $h$ is represented by its numbering in $H_{n, M}$. Since it uses $O(n \log n')$ bits to represent $L$ and $O(n + \log \log M)$ bits to represent $h$, the overall size of the certificate is $O(n \log n' + \log \log M)$. 
	
	Let us describe the verification algorithm. Each vertex $u$ does the following. First, it reads $n$ in the global certificate and computes $M$. Then, it can determine $h$ in $H_{n, M}$ thanks to its numbering in the certificate. Finally, $u$ accepts if and only if, for all $v \in N(u)$, $\{L[h(Id(u))],L[h(Id(v))]\} \in E'$. If it is not the case, $u$ rejects.
	
	Let us prove the correctness. First, assume that $G$ admits indeed a homomorphism to $H$. Then, by giving the certificate as described above, since $\varphi$ is a homomorphism, each vertex $u \in V$ accepts. Conversely, assume that every vertex accepts with some certificate $c$, and let us prove that there exists a homomorphism from $G$ to $H$. Since all the vertices accept, every vertex $u$ checked if $\{L[h(Id(u))],L[h(Id(v))]\} \in E'$ for every $v \in N(u)$, for some function $h$ which is written in $c$. Note that nothing ensures that $h$ is indeed a perfect hash function for the set $S$ of identifiers, but in fact, it is not necessary to check that $h$ is injective on $S$. Indeed, since every vertex $u$ accepted, then for every $v \in N(u)$, we have $\{L[h(Id(u))], L[h(Id(v))]\} \in E'$. So $\varphi(u):= L[h(Id(u))]$ defines a homomorphism from $G$ to $H$. Thus, it proves the correctness of the scheme.
\end{proof}

\section{Generalization : global certification of a Constraint Satisfaciton Problem}

More generally, perfect hashing can be used to certify the existence of a solution to a Constraint Satisfaction Problem (abbreviated into CSP). A \emph{CSP} consists in a set $V$ of variables, a domain $D$ of values for the variables, and a set $C$ of constraints. We say that it admits a solution if there is a mapping from the variables to the domain satisfying all the constraints. For instance, $k$-colorability is a particular case of a CSP, where there is one variable $x_u$ for each vertex~$u$, the domain is $\{0, \ldots, k-1\}$, and the constraints are $x_u \neq x_v$ for every edge $\{u,v\}$.

Using the same perfect hashing technique, we can design a global certification scheme in $O(n \log |D| + \log \log M)$ for the existence of a solution for any CSP with $n$ variables and domain $D$, such that the variables perform the verification, have identifiers, and each variable $v$ knows the identifiers of all the variables $w$ sharing a constraint with $v$.

\paragraph{Acknowledgments.} The authors would like to thank William Kuszmaul for fruitful discussion on hashing.

\bibliographystyle{plain}
\bibliography{biblio}

\begin{thebibliography}{1}

\bibitem{EsperetHZ23}
Louis Esperet, Nathaniel Harms, and Viktor Zamaraev.
\newblock Optimal adjacency labels for subgraphs of cartesian products.
\newblock In {\em 50th International Colloquium on Automata, Languages, and
  Programming, {ICALP} 2023}, volume 261, pages 57:1--57:11, 2023.

\bibitem{Feuilloley21}
Laurent Feuilloley.
\newblock Introduction to local certification.
\newblock {\em Discret. Math. Theor. Comput. Sci.}, 23(3), 2021.

\bibitem{FeuilloleyH18}
Laurent Feuilloley and Juho Hirvonen.
\newblock Local verification of global proofs.
\newblock In {\em 32nd International Symposium on Distributed Computing, {DISC}
  2018, New Orleans, LA, USA, October 15-19, 2018}, volume 121, pages
  25:1--25:17, 2018.

\bibitem{Mehlhorn84}
Kurt Mehlhorn.
\newblock {\em Data Structures and Algorithms 1: Sorting and Searching},
  volume~1 of {\em {EATCS} Monographs on Theoretical Computer Science}.
\newblock Springer, 1984.

\end{thebibliography}

\end{document}